\DeclareMathAlphabet{\mathcal}{OMS}{cmsy}{b}{n}
\DeclareMathAlphabet{\mathcal}{OMS}{cmsy}{m}{n}
 \providecommand{\F}{\mathbb{F}}
\newtheorem{lemma}{Lemma}[section]
\newtheorem{theorem}[lemma]{Theorem}
\newtheorem{cor}[lemma]{Corollary}
\newtheorem{defn}{Definition}
\theoremstyle{remark}
\renewcommand{\epsilon}{\varepsilon}
\renewcommand{\le}{\leqslant}
\renewcommand{\ge}{\geqslant}
\def\RR{\mathbb{R}}
\def\ZZ{\mathbb{Z}}
\def\FF{\mathbb{F}}
\def \mA {\mathcal{A}}
\def \mC {\mathcal{C}}
\def \mF {\mathcal{F}}
\def \mR {\mathcal{R}}
\def \mS {\mathcal{S}}
\newcommand{\Gb}{\beta}
\newcommand{\Gd}{\delta}     
\newcommand{\Ge}{\epsilon}
\def \bc {{\bf c}}
\def \bx {{\bf x}}
\def \by {{\bf y}}
\def \bu {{\bf u}}
\def \bv {{\bf v}}
\def\rank{\rm rank}
\date{}
\title{Asymptotic Gilbert-Varshamov bound on Frequency Hopping Sequences}
\author{Xianhua Niu\thanks{School of Computer and Software Engineering, Xihua University and National Key Laboratory of Science and Technology on Communications, University of Electronic Science and Technology of China. Research supported in part by the National Science Foundation
of China(Grant No. 61401369), the Youth Science and Technology
Fund of Sichuan Province(No.2017JQ0059). Email: {\tt rurustef1212@gmail.com.}} \and Chaoping Xing\thanks{School of Physical and Mathematical Sciences, Nanyang Technological University, Singapore. Email: {\tt xingcp@ntu.edu.sg}.} \and Chen Yuan\thanks{Centrum Wiskunde \& Informatica, Amsterdam, Netherlands. Email: {\tt Chen.Yuan@cwi.nl}} }
\begin{document}

\maketitle


%
%
%

\begin{abstract}
Given a $q$-ary frequency hopping sequence set of length $n$ and size $M$ with Hamming correlation $H$, one can obtain a $q$-ary (nonlinear) cyclic code of length $n$ and size $nM$ with Hamming distance $n-H$.  Thus, every upper bound on the size of a code from coding theory gives an upper bound on the size of a frequency hopping sequence set. Indeed,  all upper bounds from coding theory have been converted to upper bounds on frequency hopping sequence sets (\cite{Ding09}). On the other hand, a lower bound from  coding theory does not automatically produce a lower bound for frequency hopping sequence sets. In particular, the most important lower bound--the Gilbert-Varshamov bound in coding theory has not been transformed to frequency hopping sequence sets. The purpose of this paper is to convert the Gilbert-Varshamov bound in coding theory to frequency hopping sequence sets by establishing a connection between a special family of cyclic codes (which are called hopping cyclic codes in this paper) and frequency hopping sequence sets.
We provide two proofs of the Gilbert-Varshamov bound. One is based on probabilistic method that requires advanced tool--martingale. This proof covers the whole rate region.
The other  proof is purely elementary but only covers part of the rate region.
\end{abstract}

\section{Introduction}
Frequency hopping (FH) sequences are designed for transmitting radio signals between transmitter and receiver.
To mitigate the interference caused by hits of frequencies, some systems require the set of FH sequences with low Hamming correlation and large size and others require a single FH sequence with good Hamming autocorrelation~\cite{Fan,Golomb,Simon}. To judge the performance of an FH sequence set, we need to explore its theoretical limit in terms of five parameters, the size of available frequencies (alphabet size), sequence length, family size, maximum Hamming autocorrelation and maximum Hamming crosscorrelation.


Lempel and Greenberger ~\cite{LG} established a lower bound on the maximum Hamming autocorrelation of an FH sequence. Peng and Fan~\cite{PF} obtained a lower bound on the maximum Hamming correlation of an FH sequence set (or an upper bound on family size). In fact, this bound can be obtained from the Plotkin bound in coding theory. As an FH sequence set gives a nonlinear cyclic code, every upper bound on the size of a code in coding theory  can be converted into an upper bound on the size of an FH sequence set. Indeed, Ding et.al ~\cite{Ding09} presented three upper bounds on the family size of the FH sequence set derived from the classical bounds in coding theory. By reformulating the Singleton bound in~\cite{Ding09}, Yang et.al \cite{Yang} obtained a new and tighter lower bound on the maximum Hamming correlation of an FH sequence set. Liu et.al \cite{Liu} further improved this bound.
Note that a lower bound on maximum Hamming correlation also implies an upper bound on the family size.
There are many attempts to construct optimal sets of FH sequences meeting these upper bounds on the family size \cite{PF,Ding09,Ding10,Yang,Golomb}.

 On the other hand, a lower bound from  coding theory does not automatically produce a lower bound for frequency hopping sequence sets. In particular, the most important lower bound--the Gilbert-Varshamov bound in coding theory has not been transformed to frequency hopping sequence sets. The Gilbert-Varshamov bound in coding theory is asymptotically good in the sense that the length $n$ tends to $\infty$ and the  alphabet size $q$ is fixed. This scenario is interesting from both practical and theoretical points of view for FH sequence sets. In this scenario, most of the upper bounds on  FH sequence sets mentioned above are not achievable.
The Gilbert-Varshamov bound in coding theory characterizes the existence of good codes with positive relative distance and positive rate over a fixed alphabet size. Considering the strong connection between FH sequences and cyclic codes, one might be curious about whether there exists a similar lower bound for the FH sequence set. We emphasize that the classic Gilbert-Varshamov bound does not hold for cyclic codes (in fact it is big open problem in coding theory about whether there are linear cyclic codes with positive relative distance and positive rate over a fixed alphabet size).
 In this paper, we present a class of cyclic codes called hopping cyclic codes which can be converted into sets of FH sequences. We show that there exists a class of hopping cyclic codes attaining Gilbert-Varshamov bound. This also implies that there exists a class of FH sequence sets attaining the same bound. In this sense, our results characterize the asymptotic behaviour of an FH sequence set.

The rest of our paper is organized as follows. In Section $2$, we introduce some notations and known bounds on an FH sequence set.
In Section $3$, we obtain the asymptotic version of upper bound from known upper bounds on an FH sequence set. In Section $4$,
we prove the Gilbert-Varshamov bound on an FH sequence set by showing the existence of hopping cyclic codes attaining the Gilbert-Varshamov bound.


\section{Preliminaries}

\subsection{Notations and upper bounds on  FH sequences}

Let $F= \{f_{1}, f_{2},\ldots, f_{q}\}$ be a frequency slot
set which is also regarded as an alphabet set. Let $\mS$ be the collection of all sequences of length $n$ over $F$.
We call the element in $\mS$ an FH sequence.
Given two FH sequences $\bx$=($x_{0}$, $x_{1}$,\ldots, $x_{n-1}$), $\by$=
($y_{0}$, $y_{1}$,\ldots, $y_{n-1}$), the Hamming
correlation function $H_{\bx,\by}(t)$ with time delay $\tau$ is defined to be
\begin{eqnarray}
H_{\bx,\by}(\tau)=\sum^{n-1}_{i=0}h(x_{i},y_{i+\tau}), \quad 0\leq \tau<n,
\end{eqnarray}
where $h(a, b)=1$ if $a=b$, and $h(a, b)=0$ otherwise. All the operations among
the indices are performed by $\bmod n$.

Let $\mF$ be a subset of $\mS$.
Define the maximum Hamming autocorrelation $H_{a}(\mF)$, the maximum Hamming cross correlation $H_{c}(\mF)$, the maximum Hamming
correlation $H_{m}(\mF)$ of set $\mF$ as follows:
\begin{eqnarray*}
&&H_{a}(\mF)=\max_{0<\tau<n}\{H_{\bx,\bx}(\tau)\},\quad H_{c}(\mF)=\max_{0<\tau\leq n, \bx\neq\by\in \mF}\{H_{\bx,\by}(\tau)\}\\
&&H_{m}(\mF)=\max\{H_a(\mF), H_c(\mF)\}.
\end{eqnarray*}

In this paper, we use $(n,M,\lambda;q)$ to denote a set $\mF$ that consists of $M$ frequency hopping
sequences of length $n$ over alphabet size $q$ has the maximum Hamming
correlation $H_{m}(\mF)=\lambda$.

\begin{defn}
Let $\mF$ be an FH sequence set with parameters $(n,M,\lambda;q)$. The information rate $r(\mF)$ and the relative Hamming correlation $\delta_H(\mF)$ is
$$
r(\mF)=\frac{\log_qM}{n}, \quad \delta_H(\mF)=\frac{\lambda}{n}.
$$
\end{defn}

To capture the asymptotic behaviour of code,  it is natural to define the information rate and its relative Hamming correlation of an FH sequence set as a corresponding version of code rate and relative distance.\footnote{Precisely speaking, the relative Hamming correlation is $1-$``relative distance''.}

\begin{defn}
For fixed integer $q\ge 2$, the fundamental domain of  frequency hopping  sequences is defined to be
\[\begin{split}
D_q=\{(\Gd_H,r)\in\RR^2:\; \mbox{there exists a family   $\F=\{\mF_i\}_{i=1}^{\infty}$  of FH sequence sets such that} \\ \mbox{the length $n_i$ of $\mF_i$ tends to $\infty$,  $\lim_{i\rightarrow\infty}\frac{\lambda(\mF_{i})}{n_i}=\Gd_H$ and $\lim_{i\rightarrow\infty}\frac{\log_q|\mF_i|}{n_i}=r$.}\}\end{split}\]
\end{defn}
Note that for a given family   $\F=\{\mF_i\}_{i=1}^{\infty}$  of FH sequence sets, the limits $\lim_{i\rightarrow\infty}\frac{\lambda(\mF_i)}{n_i}$ and $\lim_{i\rightarrow\infty}\frac{\log_q|\mF_i|}{n_i}$ may not exist. However, one can always find a subfamily $\mathbb{E}=\{\mF_{i_j}\}_{j=1}^{\infty}$ of $\F$ such that both $\lim_{j\rightarrow\infty}\frac{\lambda(\mF_{i_j})}{n_{i_j}}$ and $\lim_{j\rightarrow\infty}\frac{\log_q|\mF_{i_j}|}{n_{i_j}}$ exist. Furthermore, we have
$\lim_{j\rightarrow\infty}\frac{\lambda(\mF_{i_j})}{n_{i_j}}\ge\liminf_{i\rightarrow\infty}\frac{\lambda(\mF_{i})}{n_i}$ and $\lim_{j\rightarrow\infty}\frac{\log_q|\mF_{i_j}|}{n_{i_j}}\ge\liminf_{i\rightarrow\infty}\frac{\log_q|\mF_i|}{n_i}$.

As in coding theory, it is straightened to see that there is function $\Gb_q(\Gd_H)$ from $[0,1]$ to $[0,1]$ such that
\[D_q=\{(\Gd,r)\in[0,1]\times[0,1]:\; r\le \Gb_q(\Gd_H)\}.\]
By the Plotkin bound \cite{Ding09} or the Peng-Fan bound \cite{PF}, one immediately has $\Gb_q(\Gd_H)=0$ for all $0\le\Gd_H\le\frac1q$. As in coding theory, it is an important open problem to determine the function $\Gb_q(\Gd_H)$. However, as one can imagine, determining the function $\Gb_q(\Gd_H)$ is a challenging and difficult problem. Instead, one would be interested in lower our upper bounds on the function $\Gb_q(\Gd_H)$. Various  upper bounds on the function $\Gb_q(\Gd_H)$ have been obtained from coding theory. The various upper bounds of finite version given in \cite{Ding09} can be easily converted into their asymptotic versions.

\begin{theorem}[The asymptotic upper bounds on FH sequences \cite{Ding09}]
For $q\ge 2$, one has
\begin{itemize}
\item[{\rm (i)}] The Singleton bound
\[\Gb_q(\Gd_H)\le \Gd_H.\]
\item[{\rm (ii)}]
The  Plotkin bound on FH sequences:
\begin{align*}
&\Gb_q(\Gd_H)=0, \quad  \mbox{for}\ 0\leq \delta_H \leq \frac{1}{q},\\
&\Gb_q(\Gd_H)\leq -\frac{1}{q-1}+\frac{q}{q-1}\delta_H, \quad \mbox{for}\  \frac{1}{q}\leq \delta_H \leq 1.
\end{align*}
\item[{\rm (iii)}] The Sphere-packing bound
\begin{equation*}
\Gb_q(\Gd_H)\leq 1-H_q\left(\frac{1-\delta_H}{2}\right)
\end{equation*}
where $H_q(x)=x\log_q (q-1)-x\log_q x-(1-x)\log_q(1-x)$.
\item[{\rm (iv)}] The Linear Programming bound 
\begin{equation*}
\Gb_q(\Gd_H)\leq H_q\left(\frac{q-1-(q-2)(1-\delta_H)-2\sqrt{\delta_H(\mF)(1-\delta_H)(q-1)}}{q}\right).
\end{equation*}
\end{itemize}
\end{theorem}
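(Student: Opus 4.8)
The plan is to exploit the correspondence between FH sequence sets and cyclic codes sketched in the introduction, and then to invoke the classical asymptotic bounds of coding theory. Given an $(n,M,\lambda;q)$ set $\mF$, I would form the code $\mC$ consisting of all $n$ cyclic shifts of all $M$ sequences in $\mF$. Two distinct codewords are either shifts of the same base sequence by distinct amounts (so their agreement count is an autocorrelation $H_{\bx,\bx}(\tau)$ at a nonzero delay) or shifts of two distinct base sequences (so their agreement count is a crosscorrelation $H_{\bx,\by}(\tau)$, with $\tau=n\equiv 0$ covering the zero-shift case); in either situation the number of coordinates on which they agree is at most $H_m(\mF)=\lambda$, so they differ in at least $n-\lambda$ places. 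Hence $\mC$ is a $q$-ary code of length $n$ and minimum distance $d\ge n-\lambda$. Since $\lambda<n$ in the asymptotic regime, all shifts are pairwise distinct and $|\mC|=nM$.

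First I would translate these parameters into the asymptotic normalisation. The relative distance of $\mC$ satisfies $d/n\ge 1-\lambda/n=1-\Gd_H+o(1)$, while its rate is
\[
\frac{\log_q|\mC|}{n}=\frac{\log_q(nM)}{n}=\frac{\log_q M}{n}+\frac{\log_q n}{n}=r(\mF)+o(1),
\]
because the extra factor $n$ in the size contributes only the vanishing term $\tfrac{\log_q n}{n}$. This observation is the crux of the whole argument: although $\mC$ is $n$ times larger than the family, this blow-up is invisible at the level of rates, so any asymptotic upper bound on the rate of $q$-ary codes as a function of relative distance transfers verbatim to $\Gb_q$ under the substitution $\Gd\mapsto 1-\Gd_H$.

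With this dictionary in hand, each of the four bounds follows by quoting the corresponding classical asymptotic bound for codes and simplifying. The Singleton bound $R\le 1-\Gd$ gives $\Gb_q(\Gd_H)\le \Gd_H$; the Plotkin bound $R\le 1-\tfrac{q}{q-1}\Gd$, valid for $\Gd\le 1-\tfrac1q$ and forcing $R=0$ beyond it, rearranges to $\Gb_q(\Gd_H)\le -\tfrac{1}{q-1}+\tfrac{q}{q-1}\Gd_H$ on $\tfrac1q\le\Gd_H\le 1$ and to $\Gb_q(\Gd_H)=0$ on $0\le\Gd_H\le\tfrac1q$; the sphere-packing bound $R\le 1-H_q(\Gd/2)$ becomes $\Gb_q(\Gd_H)\le 1-H_q\!\big(\tfrac{1-\Gd_H}{2}\big)$; and the first linear-programming (MRRW) bound yields item (iv) after the same substitution. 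Passing from a single family to the fundamental domain then requires the subfamily argument recorded in the remark after the definition of $D_q$, so that the limiting rate and relative correlation actually exist. The only genuine subtlety to watch is the step identifying $\lim \tfrac{\log_q|\mC|}{n}$ with $r(\mF)$: one must confirm that the $n$-fold multiplicity of shifts is exact (guaranteed by $\lambda<n$ and the distinctness of the shifts) and that $\log_q n=o(n)$, which is precisely what makes the seemingly lossy reduction $M\mapsto nM$ asymptotically free.
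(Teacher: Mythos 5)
Your proposal is correct and follows essentially the same route the paper intends: convert an $(n,M,\lambda;q)$ FH sequence set into a $q$-ary code of size $nM$ and distance $n-\lambda$ (the paper's Lemma on hopping cyclic codes, originating in the cited work of Ding et al.), observe that the factor $n$ costs only $\tfrac{\log_q n}{n}=o(1)$ in rate, and then apply the classical asymptotic Singleton, Plotkin, sphere-packing, and first linear programming bounds under the substitution $\Gd\mapsto 1-\Gd_H$. Your handling of the distinctness of shifts (via $\lambda<n$) and of the subfamily/limit issue matches the paper's remarks, so no further changes are needed.
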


Figure 1 shows the relation between the function $\Gb_q(\Gd_H)$ and various upper bounds.

\begin{figure}[h]
\centering
\includegraphics[scale=0.56]{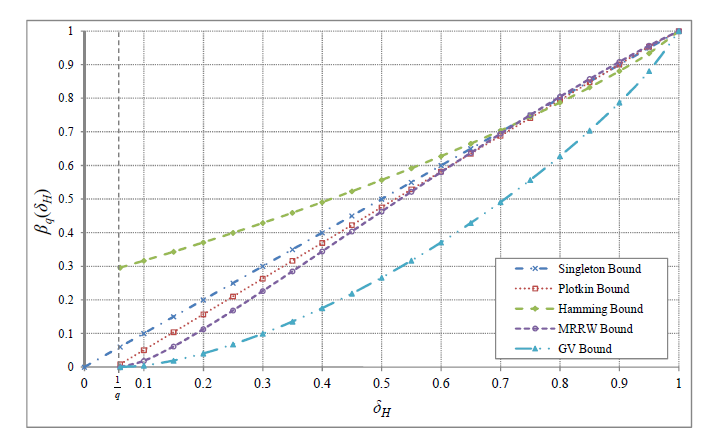}
\caption{Asymptotic upper and lower bounds on FH sequences for $q=17$.}
\centering
\end{figure}

As far as we know, no lower bounds on the function $\Gb_q(\Gd_H)$ are known in literature. The main purpose of this paper is to convert the Gilbert-Varshamov bound from coding theory  into a lower bounds for FH sequence sets.

\subsection{Connection between FH sequences and codes}

Let $\bu=(u_0, u_1,\cdots, u_{n-1})\in \ZZ_q^n$. Define $\bu_i=(u_{i}, u_{i+1}, \cdots, u_{i-1})$ to be the $i$-th cycle shift of vector $\bu$.
The Hamming distance between $\bu$ and $\bu_i$ is denoted by $d(\bu,\bu_i)$.
Define $d(\bu)=\min_{0<i<n}d(\bu,\bu_i)$.
$\bu$ can also be treated as an FH sequence of length $n$ over a alphabet size $q$.
Then, the Hamming autocorrelation $H_{\bu,\bu}(i)$ of $\bu$ at time delay $i$ becomes $n-d(\bu,\bu_i)$.
It follows that the Hamming autocorrelation of $\bu$ is
$$H_a(\bu)=\max_{0<i<n}H_{\bu,\bu}(i)=\max_{0<i<n} \{n-d(\bu,\bu_i)\}=n-d(\bu).$$
We hope that $\bu_i$ for $0<i<n$ and $\bu$ are all distinct. Otherwise, we only have a trivial
bound $H_a(\bu)=n$. To serve this purpose, we come up with the concept of hopping cyclic codes.

\begin{defn}
A cyclic code $C\in \ZZ_q^n$ is called a hopping cyclic code if
for any $\bc\in C$, $\bc$ and any cycle shift of $\bc$ are all distinct.
\end{defn}

\begin{lemma}\label{lm:convert}
There is an $(n,M,\lambda;q)$ FH sequence set $\mF$ if and only if there is a $q$-ary $(n,nM,n-\lambda)$ hopping cyclic code.
\end{lemma}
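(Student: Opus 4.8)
The plan is to build everything on the single identity $H_{\bx,\by}(\tau)=n-d(\bx,\by_\tau)$, which is the crosscorrelation analogue of the autocorrelation formula $H_{\bu,\bu}(i)=n-d(\bu,\bu_i)$ already recorded in the excerpt; it holds because $(\by_\tau)_i=y_{i+\tau}$, so the positions where $\bx$ and $\by_\tau$ agree are exactly those counted by $h(x_i,y_{i+\tau})$. Combined with the shift-invariance $d(\bu_a,\bv_a)=d(\bu,\bv)$ of the Hamming metric, this identity lets me translate every correlation value into a distance between two codewords and back. The correspondence I set up matches each FH sequence with a full cyclic orbit of size $n$, so that an $(n,M,\lambda;q)$ set corresponds to $M$ disjoint orbits, i.e.\ a code of size $nM$; the hopping condition is precisely what forces each orbit to have the full size $n$, and throughout I work in the meaningful regime $\lambda<n$ (otherwise the distance $n-\lambda$ and the hopping property both degenerate).

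For the forward direction I take $\mF=\{\bx^{(1)},\dots,\bx^{(M)}\}$ and define $C$ to be the union of all cyclic shifts $\bx^{(j)}_i$ with $1\le j\le M$, $0\le i<n$. This $C$ is cyclic by construction. Since $H_a(\mF)\le\lambda<n$, no nonzero shift fixes a sequence, so each $\bx^{(j)}$ has exactly $n$ distinct shifts and $C$ is hopping; since $H_c(\mF)\le\lambda<n$ and the crosscorrelation range includes $\tau=n$ (the unshifted comparison), distinct sequences never share a shift, so the $M$ orbits are disjoint and $|C|=nM$. To compute the minimum distance I write two codewords as $\bx^{(j)}_a$ and $\bx^{(k)}_b$; by shift-invariance $d(\bx^{(j)}_a,\bx^{(k)}_b)=d(\bx^{(j)},\bx^{(k)}_{b-a})=n-H_{\bx^{(j)},\bx^{(k)}}(b-a)\ge n-\lambda$, whether $j=k$ (an autocorrelation term) or $j\ne k$ (a crosscorrelation term). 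Hence $C$ has distance $\ge n-\lambda$, with equality because the pair realizing $H_m(\mF)=\lambda$ yields a pair of codewords at distance exactly $n-\lambda$.

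For the converse I start from a hopping cyclic code $C$ with parameters $(n,nM,n-\lambda)$. Being cyclic and hopping, $C$ partitions into orbits of size exactly $n$, hence into $nM/n=M$ orbits; choosing one representative from each gives $\mF=\{\bx^{(1)},\dots,\bx^{(M)}\}$. For any $\bx^{(j)}_i$ with $0<i<n$ and any $\bx^{(k)}_\tau$ with $j\ne k$, both members are distinct codewords, so $d\ge n-\lambda$, and the identity converts this into $H_a(\mF)\le\lambda$ and $H_c(\mF)\le\lambda$. Conversely, a pair of codewords realizing the minimum distance $n-\lambda$ lies either in one orbit or in two; applying shift-invariance to translate one member onto a chosen representative shows it forces either $H_a(\mF)=\lambda$ or $H_c(\mF)=\lambda$, so $H_m(\mF)=\lambda$ exactly.

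The routine parts are the two inequalities; the step I expect to need the most care is pinning down \emph{equality} rather than just $\le$ in both directions, since the extremal pair of codewords need not involve the chosen representatives. The shift-invariance of the Hamming distance is what rescues this: any pair can be translated so that one member is a representative, after which its distance reads off as an auto- or crosscorrelation value of a sequence in $\mF$. I also have to be attentive to the boundary case $\tau=n$ (equivalently $\tau\equiv 0$) in the crosscorrelation, which encodes the comparison of two representatives at the same phase and is exactly the endpoint needed both to keep distinct orbits disjoint and to account for the zero-shift minimum-distance pairs.
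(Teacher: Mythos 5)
Your proposal is correct and follows essentially the same approach as the paper: the paper's (much terser) proof likewise converts an FH sequence set into a hopping cyclic code by taking each sequence together with all of its cyclic shifts, and reverses this by selecting orbit representatives. Your write-up simply supplies the details the paper leaves implicit --- the identity $H_{\bx,\by}(\tau)=n-d(\bx,\by_\tau)$, shift-invariance, disjointness of the $M$ orbits, and the verification that the extremal parameters $\lambda$ and $n-\lambda$ are attained exactly rather than merely bounded.
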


\begin{proof}
Given any FH sequence set $\mF$ with parameters $(n,M,\lambda;q)$, we convert it into a hopping cyclic code $C$ in such a way that for each sequence $\bu\in \mF$, we add
$\bu$ together with all its cycle shift vectors $\bu_i,i=1,\ldots,n-1$ to $C$.
By the definition of $\mF$, the distance of this hopping cyclic code is $n-H_m(\mF)=n-\lambda$. Moreover, the size of $C$ is $n$ times the size of $\mF$. We can reverse this argument to prove the other direction.
\end{proof}

%

\section{Asymptotic Lower Bounds on FH sequences}
On the contrary to the upper bound, the lower bound on codes does not necessarily lead to the same lower bound on FH sequences.
Because we have to show the existence of hopping cyclic codes instead of any codes meeting this lower bound. In this sense, we need to prove the Gilbert-Varshamov bound on hopping cyclic codes. We provide two proofs, one based on probabilistic method that covers the whole rate region and another one is purely elementary that only covers part of the rate region.

\subsection{Probabilistic Method}

Our probabilistic method requires a tool called martingale.

\begin{defn}[\cite{Will}]
Let $\{X_i\}_{i=0}^{n-1}, \{Y_i\}_{i=0}^{n-1}$ be two sequences of  discrete random variables. We say that  $\{Y_i\}_{i=0}^{n-1}$ is a martingale with respect to $\{X_i\}_{i=0}^{n-1}$ if
\begin{itemize}
\item  $E(|Y_i|)<+\infty$, $0\leq i\leq n-1$.
\item  $E(Y_i|X_1,X_2,\cdots,X_{i-1})=Y_{i-1}$, $ 1\leq i\leq n-1$.
\end{itemize}
\end{defn}

\begin{lemma}[Azuma-Hoeffding inequality]
Suppose that $Y_{0},Y_{1},\ldots,Y_{n-1}$ is a martingale with respect to $X_0,\ldots,X_{n-1}$ such that
$|Y_{i+1}-Y_{i}|\leq c_{i}$.
Then, for any positive $t$, it holds that
\begin{equation*}
\Pr[Y_{n-1}-Y_{0}\geq t]\leq \exp\big(\frac{-t^2}{\sum_{i=0}^{n-2}2c_{i}^2}\big)
\end{equation*}
\end{lemma}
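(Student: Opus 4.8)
The plan is to use the exponential moment (Chernoff-style) method adapted to the martingale structure. First I would write the total increment as a sum of martingale differences: setting $D_i = Y_i - Y_{i-1}$ for $1 \le i \le n-1$, I have $Y_{n-1}-Y_0 = \sum_{i=1}^{n-1} D_i$, where the hypothesis $|Y_{i+1}-Y_i|\le c_i$ gives $|D_i|\le c_{i-1}$, and the martingale property $E(Y_i\mid X_1,\ldots,X_{i-1})=Y_{i-1}$ gives $E[D_i \mid X_1,\ldots,X_{i-1}] = 0$. For any parameter $s>0$, applying Markov's inequality to the nonnegative random variable $e^{s(Y_{n-1}-Y_0)}$ yields
\[
\Pr[Y_{n-1}-Y_0\ge t]\le e^{-st}\,E\!\left[e^{s(Y_{n-1}-Y_0)}\right].
\]

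The heart of the argument is to bound the moment generating function by peeling off one difference at a time via the tower property of conditional expectation. Conditioning on $X_1,\ldots,X_{n-2}$, the partial sum $\sum_{i=1}^{n-2}D_i$ is measurable and hence acts as a constant, so that $e^{sD_{n-1}}$ is the only remaining random factor:
\[
E\!\left[e^{s\sum_{i=1}^{n-1} D_i}\right] = E\!\left[e^{s\sum_{i=1}^{n-2} D_i}\; E\!\left[e^{sD_{n-1}} \mid X_1,\ldots,X_{n-2}\right]\right].
\]
The key estimate is Hoeffding's lemma: if $Z$ has $E[Z]=0$ and $|Z|\le c$, then $E[e^{sZ}]\le e^{s^2c^2/2}$. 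Applying this conditionally to $Z=D_{n-1}$, which has conditional mean zero and satisfies $|D_{n-1}|\le c_{n-2}$ almost surely, bounds the inner factor by the constant $e^{s^2 c_{n-2}^2/2}$, which pulls out of the outer expectation. Iterating this peeling across all $n-1$ differences gives
\[
E\!\left[e^{s(Y_{n-1}-Y_0)}\right]\le \exp\!\left(\frac{s^2}{2}\sum_{i=0}^{n-2} c_i^2\right).
\]

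Finally I would combine this with the Markov step and optimize in $s$. Substituting yields $\Pr[Y_{n-1}-Y_0\ge t]\le \exp(-st + \tfrac{s^2}{2}\sum_{i=0}^{n-2}c_i^2)$, and minimizing the exponent with the choice $s = t/\sum_{i=0}^{n-2}c_i^2$ produces $\exp\!\big(-t^2/(2\sum_{i=0}^{n-2}c_i^2)\big)$, which is exactly the claimed bound $\exp\!\big(-t^2/\sum_{i=0}^{n-2} 2c_i^2\big)$.

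I expect the main obstacle to be the justification and conditional application of Hoeffding's lemma. Proving the lemma itself rests on a convexity argument: since $z\mapsto e^{sz}$ is convex, one bounds it on $[-c,c]$ by its chord, takes expectations using $E[Z]=0$, and then controls the resulting function of $s$ by $e^{s^2c^2/2}$ through a second-order Taylor estimate of its logarithm. The subtlety in the martingale setting is that this estimate must hold conditionally and uniformly, so I would need to argue carefully that $|D_i|\le c_{i-1}$ is an almost-sure pointwise bound and that $Y_{i-1}$ is measurable with respect to the conditioning variables; this is what legitimizes applying the lemma inside each conditional expectation before extracting the resulting constant factor.
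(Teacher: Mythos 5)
Your proof is correct, but there is nothing in the paper to compare it against: the paper states the Azuma--Hoeffding inequality as a lemma without proof, treating it as a classical black-box tool (the reference given is Williams' martingale text). Your argument is the standard proof of that classical result: write $Y_{n-1}-Y_0=\sum_{i=1}^{n-1}D_i$ with $D_i=Y_i-Y_{i-1}$, apply Markov's inequality to $e^{s(Y_{n-1}-Y_0)}$, peel off the differences one at a time by the tower property, bound each conditional factor by Hoeffding's lemma ($E[e^{sZ}]\le e^{s^2c^2/2}$ when $E[Z]=0$ and $|Z|\le c$), and optimize at $s=t/\sum_{i=0}^{n-2}c_i^2$. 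The bookkeeping is right: the $n-1$ differences are bounded by $c_0,\ldots,c_{n-2}$, giving $\exp\big(-t^2/(2\sum_{i=0}^{n-2}c_i^2)\big)$, which is exactly the stated bound since $\sum_{i=0}^{n-2}2c_i^2=2\sum_{i=0}^{n-2}c_i^2$. The measurability point you raise at the end is the only real subtlety: for $E[D_i\mid X_1,\ldots,X_{i-1}]=0$ to follow from the martingale property, and for the partial sum $\sum_{i=1}^{n-2}D_i$ to factor out of the conditional expectation, $Y_{i-1}$ must be a function of the conditioning variables. The paper's definition of martingale does not state this adaptedness requirement explicitly, but it is part of the standard definition and holds for the martingales the paper actually constructs (there $Y_j$ is an explicit function of $X_0,\ldots,X_j$), so your proof is complete modulo that standard convention.
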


Recall that given $\bu=(u_0,\ldots,u_{n-1})\in \ZZ_q^n$, we use $\bu_i$ to represent the $i$-th cycle shift of $\bu$.
Our goal is to show that for most of the vectors in $\ZZ_q^n$, $\bu$ and $\bu_i,1\leq i\leq n-1$ are far away from each other.
As a warmup, we prove this claim for a prime number $n$.

\begin{lemma}\label{lm:du1}
Let $n$ be a prime number and $\Ge$ be any small constant.
If we pick $\mathbf{u}\in \ZZ_{q}^{n}$ uniformly at random, then we have
$$\Pr[d(\bu)\leq (n-2)(1-1/q-\epsilon)]\leq (n-1)e^{-\frac{\epsilon^2 (n-2)}{2}}.$$
\end{lemma}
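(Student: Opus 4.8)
The plan is to bound, for each fixed shift $i$ with $0<i<n$, the probability that $d(\bu,\bu_i)$ is small, and then finish by a union bound over the $n-1$ shifts. The role of the primality of $n$ is that for every $i$ with $0<i<n$ we have $\gcd(i,n)=1$, so $k\mapsto ki \bmod n$ is a bijection of $\ZZ_n$ and the shift-by-$i$ permutation is a single $n$-cycle. Relabelling $v_k=u_{ki\bmod n}$, the $v_k$ are independent and uniform on $\ZZ_q$, and $d(\bu,\bu_i)=\sum_{k=0}^{n-1}Z_k$, where $Z_k=\mathbf{1}[v_k\ne v_{k+1}]$ (indices mod $n$). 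Each $Z_k$ has mean $1-1/q$, so $\E[d(\bu,\bu_i)]=n(1-1/q)$, and the goal becomes a lower-tail concentration statement for this sum.

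First I would confront the obstacle that the $Z_k$ are \emph{not} independent: consecutive indicators $Z_{k-1},Z_k$ share the variable $v_k$. This is exactly the setting for a Doob martingale and the Azuma--Hoeffding inequality. Condition on $v_0$ and reveal $v_1,\dots,v_{n-1}$ one coordinate at a time, setting $Y_m=\E[d'\mid v_0,\dots,v_m]$ for the quantity $d'$ introduced below. The subtlety is the cyclic wrap-around term $Z_{n-1}=\mathbf{1}[v_{n-1}\ne v_0]$: revealing the last coordinate $v_{n-1}$ resolves two edges at once, which forces a martingale increment of size $2$ and spoils the clean constant. I would sidestep this by passing to the path sum $d'=\sum_{k=0}^{n-2}Z_k\le d(\bu,\bu_i)$, which simply drops the single wrap-around edge. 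For $d'$, revealing $v_m$ resolves exactly the one edge $Z_{m-1}$, so each increment equals $Z_{m-1}-(1-1/q)$ and hence $|Y_m-Y_{m-1}|\le 1$; moreover $Y_0=\E[d'\mid v_0]=(n-1)(1-1/q)$ is independent of $v_0$.

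Then I would apply Azuma--Hoeffding (to the martingale $-Y_m$, to reach the lower tail) with all $c_i=1$ and $n-1$ increments, giving $\Pr[(n-1)(1-1/q)-d'\ge t]\le \exp(-t^2/(2(n-1)))$. Taking $t=(1-1/q)+(n-2)\epsilon$, the event $\{d'\le(n-2)(1-1/q-\epsilon)\}$ is precisely $\{(n-1)(1-1/q)-d'\ge t\}$, and since $d'\le d(\bu,\bu_i)$ this also contains $\{d(\bu,\bu_i)\le(n-2)(1-1/q-\epsilon)\}$. Expanding $t^2=(1-1/q)^2+2(1-1/q)(n-2)\epsilon+(n-2)^2\epsilon^2$ and using $2(1-1/q)(n-2)\epsilon\ge(n-2)\epsilon^2$ (valid since $\epsilon\le 2(1-1/q)$) gives $t^2\ge(n-1)(n-2)\epsilon^2$, whence $\Pr[d(\bu,\bu_i)\le(n-2)(1-1/q-\epsilon)]\le\exp(-\epsilon^2(n-2)/2)$. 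A union bound over the $n-1$ shifts $i$ then yields $\Pr[d(\bu)\le(n-2)(1-1/q-\epsilon)]\le(n-1)\exp(-\epsilon^2(n-2)/2)$, as claimed.

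The main obstacle is the cyclic wrap-around: a naive Doob martingale on the genuine cyclic distance produces one increment of size $2$, and carrying $\sum c_i^2=n+2$ through Azuma yields an exponent weaker than the target $\epsilon^2(n-2)/2$. Dropping the wrap-around edge (legitimate, as it only decreases the sum) restores all increments to size $1$; after that, the only point to verify is the elementary inequality $t^2\ge(n-1)(n-2)\epsilon^2$, which is where the hypothesis that $\epsilon$ is a small constant is used.
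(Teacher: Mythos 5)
Your proof is correct and takes essentially the same route as the paper: both exploit the primality of $n$ to relabel the shift-by-$i$ permutation as a single cycle of i.i.d.\ coordinates, run a martingale over the sequentially revealed edge indicators with the wrap-around edge dropped (so all increments are bounded by $1$), apply Azuma--Hoeffding, and finish with a union bound over the $n-1$ shifts. The only cosmetic differences are that the paper uses agreement indicators and an explicitly centered partial-sum martingale, absorbing the two loose edges into a ``$+2$'' slop term, whereas you use a Doob exposure martingale for the disagreement sum $d'$ and a slightly more careful choice of the deviation $t$ to land exactly on the exponent $\epsilon^2(n-2)/2$.
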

\begin{proof}

We first fix $i$ and bound the probability over a random vector $\bu\in \ZZ_q^n$ that the minimum distance $d(\bu,\bu_i)\leq (n-1)(1-1/q-\epsilon)$.
Since $n$ is a prime number, we can write
$$
d(\bu,\bu_i)=|\{0\leq j\leq n-1: u_j\neq u_{j+i}\}|=|\{0\leq j\leq n-1: u_{ji}\neq u_{(j+1)i}\}|.
$$
Define random variable $X_{j}$ for $j=0,\ldots,n-1$ such that
\begin{enumerate}
\item $X_{j}=1$, if $u_{ji}=u_{(j+1)i}$\\
\item $X_{j}=0$,  otherwise.
\end{enumerate}
Let
\begin{eqnarray*}
Y_j=\sum\limits_{k=0}^{j}{X_k-\frac{j+1}{q}},  j=0,1,\ldots, n-2.
\end{eqnarray*}
Then, We claim that $\{Y_j\}_{j=0}^{n-2}$ is a martingale with respect to $\{X_j\}_{j=0}^{n-2}$.

First,it is clear that $E(|Y_j|)<+\infty$, for any $0\leq j\leq n-1$.

Next, as $n$ is a prime, for $0<j\leq n-2$, $u_0, u_i, u_{2i},\cdots,u_{(j-1)i}$ and $u_{ji}$ are independent random variables. This implies that
\begin{eqnarray*}
&&E(Y_{j}|X_{1},\ldots,X_{j-1})=E(X_j|X_0,\ldots,X_{j-1})+\sum_{i=0}^{j-1}X_i-(j+1)/q
\\
&&=\Pr[u_{(j+1)i}=u_{ji}|u_{0},u_i\ldots,u_{ji}]+\sum_{i=0}^{j-1}X_i-(j+1)/q=Y_{j-1}.
\end{eqnarray*}
Moreover, as $|Y_j-Y_{j-1}|\leq 1$ for $0\leq j\leq n-2$, the Azuma-Hoeffding inequality gives
$$
\Pr[Y_{n-2}-Y_0>\epsilon (n-2)]\leq \exp(-\frac{\epsilon^2 (n-2)}{2}).
$$
Equivalently, we have
$$
\Pr[\sum_{i=0}^{n-1}X_{j}>(\epsilon+\frac{1}{q})(n-2)+2]\leq \exp(-\frac{\epsilon^2 (n-2)}{2}).
$$
This implies $\Pr[d(\bu, \bu_i)\leq (n-2)(1-1/q-\epsilon)]\leq \exp(-\frac{\epsilon^2 (n-2)}{2})$.
Taking an union bound over $i=1,\ldots,n-1$, the desired result follows.
\end{proof}
This theorem can only be applied to the vectors of prime number length. Next, we will prove a weaker but asymptotically same\footnote{``Asymptotically same'' is referred to that it yields the same result when $n$ tends to infinity.} statement that holds for vectors of arbitrary length.
Before we state our theorem, we deviate a little to introduce the well-known Chernoff bound.
\begin{theorem}[Chernoff Bound]\label{thm:chern}
Let $X_1,\ldots,X_n$ be independent random variables such that $a\leq X_i\leq b$ for all $i$. Let $X=\sum_{i=1}^n X_i$ and set $\mu=E[X]$. Then, for $\delta>0$:
$$
\Pr[X>(1+\delta)\mu]\leq e^{-\frac{2\delta^2\mu^2}{n(b-a)^2}}, \quad \Pr[X<(1-\delta)\mu]\leq e^{-\frac{\delta^2\mu^2}{n(b-a)^2}}.
$$
\end{theorem}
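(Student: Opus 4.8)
The plan is to prove both tail inequalities by the standard exponential moment (Chernoff) method, with the essential analytic input being Hoeffding's lemma on the moment generating function of a bounded random variable.

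First I would treat the upper tail. For any parameter $s>0$, Markov's inequality applied to the nonnegative random variable $e^{sX}$ gives
\[
\Pr[X>(1+\delta)\mu]=\Pr\big[e^{sX}>e^{s(1+\delta)\mu}\big]\le e^{-s(1+\delta)\mu}\,E[e^{sX}],
\]
and by independence of the $X_i$ the moment generating function factorises as $E[e^{sX}]=\prod_{i=1}^{n}E[e^{sX_i}]$. The whole argument then reduces to controlling each factor $E[e^{sX_i}]$, and this is the only place where the boundedness hypothesis $a\le X_i\le b$ enters.

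The key step is Hoeffding's lemma: for a random variable $X_i$ with mean $\mu_i=E[X_i]$ taking values in $[a,b]$, one has
\[
E\big[e^{s(X_i-\mu_i)}\big]\le \exp\Big(\tfrac{s^{2}(b-a)^{2}}{8}\Big).
\]
I would prove this by setting $\psi(s)=\log E[e^{s(X_i-\mu_i)}]$, checking $\psi(0)=\psi'(0)=0$, and observing that $\psi''(s)$ equals the variance of $X_i$ under the exponentially tilted distribution, which again is supported in $[a,b]$ and hence has variance at most $(b-a)^{2}/4$; a second-order Taylor expansion then yields the claim. Multiplying the $n$ estimates together gives $E[e^{sX}]\le e^{s\mu}\exp\big(s^{2}n(b-a)^{2}/8\big)$, so that
\[
\Pr[X>(1+\delta)\mu]\le \exp\Big(-s\delta\mu+\tfrac{s^{2}n(b-a)^{2}}{8}\Big).
\]

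Finally I would optimise the free parameter. The exponent is a quadratic in $s$ minimised at $s=4\delta\mu/\big(n(b-a)^{2}\big)$, which substitutes to give the exponent $-2\delta^{2}\mu^{2}/\big(n(b-a)^{2}\big)$ and hence exactly the stated upper-tail bound. For the lower tail the identical computation with $e^{-sX}$ in place of $e^{sX}$ produces the same exponent $-2\delta^{2}\mu^{2}/\big(n(b-a)^{2}\big)$; since this is at most $-\delta^{2}\mu^{2}/\big(n(b-a)^{2}\big)$, the weaker bound claimed in the statement follows a fortiori. The only genuinely nontrivial ingredient is Hoeffding's lemma, so I expect the main obstacle to be establishing the uniform estimate $\psi''(s)\le (b-a)^{2}/4$ on the cumulant generating function; everything else is bookkeeping and a one-variable minimisation.
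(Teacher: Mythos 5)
Your proof is correct, but there is nothing in the paper to compare it against: the paper states this result as the ``well-known Chernoff bound'' and invokes it as a black box (in the case $d>\sqrt{n}$ of the lemma for non-prime lengths), giving no proof at all. What you have written is the standard derivation of Hoeffding's inequality in multiplicative form, and every step checks out: Markov's inequality applied to $e^{sX}$, factorization of the moment generating function by independence, Hoeffding's lemma $E[e^{s(X_i-\mu_i)}]\le \exp\bigl(s^2(b-a)^2/8\bigr)$ via the cumulant generating function and the variance bound $(b-a)^2/4$ for a variable supported in $[a,b]$, and optimization at $s=4\delta\mu/\bigl(n(b-a)^2\bigr)$, which yields exactly the exponent $-2\delta^2\mu^2/\bigl(n(b-a)^2\bigr)$; the lower tail then follows a fortiori, since the symmetric computation with $e^{-sX}$ produces the factor $2$ in the exponent there as well. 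One caveat you should make explicit: your optimizing choice of $s$ is positive only when $\mu>0$, and indeed the inequalities as stated can fail when $\mu<0$ (e.g.\ $X_i\in[-1,0]$ with mean $-1/2$), so positivity of $\mu$ is an implicit hypothesis of the theorem; it does hold in the paper's application, where the summands are the variables $Z_j\in[0,r]$ with mean $r/q>0$.
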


\begin{lemma}\label{lm:nonprime}
Let $n$ be any positive integer and $\Ge$ be a small constant.
If we pick $\bu\in \ZZ_q^n$ uniformly at random, then we have
$$\Pr[d(\bu)\leq (n-2\sqrt{n})(1-1/q-\epsilon)]\leq n^2e^{-\frac{\epsilon^2 (\sqrt{n}-2)}{2}}.$$
\end{lemma}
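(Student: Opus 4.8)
The plan is to fix a single shift $i$ and to analyse $d(\bu,\bu_i)$ through the orbit structure of the cyclic shift, then take a union bound over $i$. Writing $d=\gcd(i,n)$ and $L=n/d$, the permutation $j\mapsto j+i \pmod n$ decomposes $\{0,\dots,n-1\}$ into $d$ cycles, each of length $L$, and the consecutive comparisons $\mathbf 1[u_j=u_{j+i}]$ belonging to distinct cycles depend on disjoint coordinates and are therefore independent. The distance $d(\bu,\bu_i)$ is exactly the number of disagreeing consecutive pairs, summed over these cycles. The argument behaves differently depending on whether the cycles are long or short, so I would split on whether $L\ge\sqrt n$ or $L<\sqrt n$.

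\textbf{Long cycles ($L\ge\sqrt n$).} Here there are only $d\le\sqrt n$ cycles, and each cycle is a single chain of length $L\ge\sqrt n$ on which the martingale argument of Lemma~\ref{lm:du1} applies essentially verbatim: revealing the coordinates of one cycle in order, each fresh comparison agrees with conditional probability $1/q$, so the partial sums minus $1/q$ per step form a bounded-difference martingale. Azuma--Hoeffding then bounds the probability that a single cycle has more than $(1/q+\epsilon)(L-1)$ agreements by $\exp(-\epsilon^2(L-2)/2)\le\exp(-\epsilon^2(\sqrt n-2)/2)$, which is already the target exponent. If every cycle avoids this event, summing the per-cycle disagreement counts gives $d(\bu,\bu_i)\ge (n-d)(1-1/q-\epsilon)\ge(n-2\sqrt n)(1-1/q-\epsilon)$; a union bound over the at most $\sqrt n$ cycles handles this regime.

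\textbf{Short cycles ($L<\sqrt n$).} Now each individual martingale has length $<\sqrt n$ and is too short to concentrate, which is the main obstacle. The remedy is to exploit that there are now many cycles, $d=n/L>\sqrt n$ of them, mutually independent. Let $W_r\in[0,L]$ be the number of disagreements in the $r$-th cycle, so $d(\bu,\bu_i)=\sum_{r=1}^d W_r$ has mean $\mu=n(1-1/q)$. Applying the Chernoff bound of Theorem~\ref{thm:chern} to the independent bounded variables $W_1,\dots,W_d$ (with $a=0$, $b=L$) gives a lower-tail exponent $\delta^2\mu^2/(dL^2)=\delta^2 n(1-1/q)^2/L>\delta^2(1-1/q)^2\sqrt n$, using $L<\sqrt n$. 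Choosing $\delta$ so that $(1-\delta)\mu=(n-2\sqrt n)(1-1/q-\epsilon)$ forces $\delta\gtrsim\epsilon/(1-1/q)$, whence the exponent exceeds $\epsilon^2(\sqrt n-2)/2$, as required.

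Finally I would combine the two regimes by a union bound over all $n-1$ shifts $i$ (and, within the long-cycle regime, over the at most $\sqrt n$ cycles per shift); both union factors are absorbed into the stated prefactor $n^2$, while the single-shift failure probability is at most $\exp(-\epsilon^2(\sqrt n-2)/2)$ in every case. The routine bookkeeping I am suppressing is the exact treatment of additive constants (the $L-1$ versus $L-2$ boundary term, the $2\sqrt n$ slack, and the precise relation between $\delta$, $\epsilon$ and the $(1-1/q)$ factors); the conceptual crux is recognising that a large $\gcd(i,n)$ forces short orbits on which Lemma~\ref{lm:du1} by itself fails, and that independence across the resulting many orbits is precisely what the Chernoff bound converts back into the desired concentration.
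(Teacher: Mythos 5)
Your proposal is correct and follows essentially the same route as the paper's proof: the same decomposition of $\{0,\dots,n-1\}$ into $\gcd(i,n)$ cycles, the same case split on whether the cycles are long ($d\le\sqrt n$, handled by the per-cycle martingale argument of Lemma~\ref{lm:du1} plus Azuma--Hoeffding) or short ($d>\sqrt n$, handled by the Chernoff bound across the independent per-cycle counts), and a final union bound over the $n-1$ shifts. The only cosmetic difference is that you apply Chernoff to the lower tail of the disagreement counts where the paper bounds the upper tail of the agreement counts, which is equivalent.
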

\begin{proof}
As usual, we fix $0<i<n$ to be an integer and bound the probability that $\Pr[d(\bu,\bu_i)\leq (n-2\sqrt{n})(1-1/q-\epsilon)]$.
Let $d:=\gcd(i,n)$ and $r:=\frac{n}{d}$.
Partition $\{0,\ldots,n-1\}$ into $d$ disjoint sets, $S_j=\{j+ih: 0\leq h\leq r-1\}$ for $j=0,\ldots,d-1$.
It is clear that each $S_j$ has size $r$.
Let $X_h$ be the binary random variable such that $X_h=1$ if $u_{h}=u_{h+i}$ and $0$ otherwise.
It follows that $\Pr[X_h=1]=\frac{1}{q}$.
Let $Z_j=\sum_{h\in S_j}X_{h}$.
It is clear that $Z_0,Z_1,\ldots,Z_{d-1}$ are i.i.d random variables with
expectation
$$E[Z_j]=\sum_{h\in S_j}E[X_h]=\frac{|S_j|}{q}=\frac{r}{q}.$$
We divide our discussion into two cases:
\begin{itemize}
\item $d\leq \sqrt{n}$:  We focus on the binary random variable whose indices belong to the same subset $S_j$.
Define $Y_t=\sum_{h=0}^{t}X_{j+ih}-\frac{t+1}{q}$. The same argument in Lemma \ref{lm:du1} shows that
$Y_0,Y_1,\ldots,Y_{r-2}$ forms a martingale with respect to $X_j,X_{j+i},\ldots,X_{j+i(r-2)}$. That means,
$$
\Pr[Y_{r-2}-Y_0>\epsilon (r-2)]\leq \exp(-\frac{\epsilon^2 (r-2)}{2}).
$$
Or equivalently, we have
$$
\Pr[Z_j>(\epsilon+\frac{1}{q})(r-2)+2]\leq \exp(-\frac{\epsilon^2 (r-2)}{2}).
$$
Taking a union bound over $j=0,\ldots,d-1$, we have
$$
\Pr[\exists j, Z_j>(\epsilon+\frac{1}{q})(r-2)+2]\leq d\exp(-\frac{\epsilon^2 (r-2)}{2}).
$$
Notice that $d(\bu,\bu_i)=n-\sum_{t=0}^{d-1}Z_j$.
It follows that
$$
\Pr[d(\bu,\bu_i)\leq n-\big((\epsilon+\frac{1}{q})(r-2)+2\big)d]\leq d\exp(-\frac{\epsilon^2 (r-2)}{2})\leq n\exp(-\frac{\epsilon^2 (\sqrt{n}-2)}{2}).
$$
The desired result follows as
$$n-((\epsilon+\frac{1}{q})(r-2)+2\big)d=(n-2d)(1-\frac{1}{q}-\epsilon)\geq (n-2\sqrt{n})(1-\frac{1}{q}-\Ge).$$
\item $d>\sqrt{n}$: We have already shown that $Z_0,Z_1,\ldots,Z_{d-1}$ are independent random variables
such that $Z_i\in [0,r]$ and $E[Z_i]=\frac{r}{q}$. Let $Z=\sum_{i=0}^{d-1}Z_i$ and we obtain $E[Z]=\frac{n}{q}$. By Chernoff Bound, we have
$$
\Pr[Z\geq(1+q\Ge)\frac{n}{q}]\leq e^{-\frac{2n^2(q\Ge)^2}{q^2r^2d}}=e^{-2\Ge^2d}\leq e^{-2\Ge^2\sqrt{n}}.
$$
It follows that
$$
\Pr[d(\bu,\bu_i)\leq (1-\Ge-\frac{1}{q})n]\leq e^{-2\Ge^2\sqrt{n}},
$$
as $d(\bu,\bu_i)=n-Z$. This also implies that $\Pr[d(\bu,\bu_i)\leq (n-2\sqrt{n})(1-\frac{1}{q}-\Ge)]\leq e^{-2\Ge^2\sqrt{n}}$.
\end{itemize}
Taking a union bound over $i=1,\ldots,n-1$, we obtain the desired result.
\end{proof}

Lemma \ref{lm:nonprime} says that there exist at least $q^n(1-n^2e^{-\frac{\epsilon^2 (\sqrt{n}-2)}{2}})=q^n(1-o(1))$ vectors $\bu\in \ZZ_q^n$, each satisfying
$d(\bu)\geq (n-2\sqrt{n})(1-1/q-\epsilon)$.
Observe that the relative distance $(1-2/\sqrt{n})(1-1/q-\epsilon)$ tends to $(1-1/q-\epsilon)$ as $n$ grows.
It will not affect the asymptotic Gilbert-Varshamov bound.
We denote by $\mA$ the set of these vectors.
 Our next step is to obtain a class of hopping cyclic code approaching Gilbert-Varshamov bound. The proof is a standard argument of Gilbert-Varshamov bound except that we replace the
whole space $\ZZ_q^n$ with a relatively small set $\mA$ and we add the vector $\bc$ together with all its cycle shift vectors to the
code.

\begin{theorem}[Gilbert-Varshamov Bound on hopping cyclic code]\label{thm:GV}
Let $n$ be an integer and $\Ge$ be a small constant. There exists $(n,M,d<(n-2\sqrt{n})(1-1/q-\epsilon))$ hopping cyclic code
if
$$
M\leq \frac{q^n\big(1-q^n(1-n^2e^{-\frac{\epsilon^2 (\sqrt{n}-2)}{2}})\big)}{\sum_{i=1}^{d-1}\binom{n}{i}(q-1)^i}.
$$
Moreover, when $n\rightarrow \infty$, these hopping cyclic codes meeting the classic asymptotic Gilbert-Varshamov bound.
\end{theorem}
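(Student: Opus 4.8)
The plan is to run the greedy Gilbert--Varshamov argument, but carried out on \emph{orbits} rather than on individual vectors, and confined to the good set $\mA$ produced by Lemma~\ref{lm:nonprime}. Recall that every $\bc\in\mA$ satisfies $d(\bc)\ge (n-2\sqrt n)(1-1/q-\Ge)$, so in particular the $n$ cyclic shifts $\bc_0,\dots,\bc_{n-1}$ of $\bc$ are pairwise distinct and pairwise at distance at least $(n-2\sqrt n)(1-1/q-\Ge)>d$. Hence any union of \emph{full} orbits of vectors drawn from $\mA$ is automatically a hopping cyclic code whose within-orbit distances already exceed $d$; the only thing left to control is the distance \emph{between} distinct orbits.

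The crucial reduction is that the minimum distance between two orbits is governed by a single vector comparison. Since Hamming distance is shift invariant, $d(\bc'_j,\bc_i)=d(\bc',\bc_{i-j})$, so for fixed $j$ the multiset $\{d(\bc'_j,\bc_i):i\}$ equals $\{d(\bc',\bc_k):k\}$ as $i$ ranges over $\ZZ/n$. Consequently the orbit of $\bc'$ lies at distance at least $d$ from the orbit of $\bc$ if and only if $d(\bc',\bc_k)\ge d$ for every $k$, i.e.\ $\bc'$ avoids the ``tube'' $T(\bc):=\bigcup_{k=0}^{n-1}B(\bc_k,d-1)$, where $B(\cdot,d-1)$ denotes the Hamming ball of radius $d-1$. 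This tube contains at most $n\sum_{i=0}^{d-1}\binom{n}{i}(q-1)^i$ vectors.

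I would then build the code greedily: keep the surviving part of $\mA$ and a list of representatives; repeatedly pick any surviving $\bc\in\mA$, append its entire orbit to $C$, and delete from $\mA$ every vector of $T(\bc)$. Because $d\ge 1$, each newly chosen representative is at distance at least $d\ge1$ from all previously selected orbit elements, so it lies in a genuinely new orbit (orbits of $\mA$-vectors are full and partition the space, hence are equal or disjoint), and the resulting $C$ has minimum distance at least $d$; being a union of full-length orbits of $\mA$-vectors, $C$ is a hopping cyclic code. Each iteration removes at most $n\sum_{i=0}^{d-1}\binom{n}{i}(q-1)^i$ vectors, so the process survives for at least $M\ge |\mA|/(n\sum_{i=0}^{d-1}\binom{n}{i}(q-1)^i)$ steps, where $|\mA|\ge q^n(1-n^2e^{-\Ge^2(\sqrt n-2)/2})$ by Lemma~\ref{lm:nonprime}. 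Thus whenever $M$ satisfies the stated inequality we can stop after $M$ orbits, and Lemma~\ref{lm:convert} converts this into an FH sequence set of size $M$.

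For the asymptotic claim I would take logarithms and let $n\to\infty$: the factor $|\mA|/q^n\to 1$, the $n$ in the denominator and the gap between $n-2\sqrt n$ and $n$ contribute only $O(\log_q n/n)\to 0$ to the rate, and $\tfrac1n\log_q\sum_{i=0}^{d-1}\binom{n}{i}(q-1)^i\to H_q(\delta)$ as $d/n\to\delta$. This gives rate $r\ge 1-H_q(\delta)-o(1)$ at relative distance $\delta$ (equivalently relative Hamming correlation $1-\delta$), which is exactly the classical asymptotic Gilbert--Varshamov bound. The step requiring the most care is the orbit reduction of the second paragraph together with the verification that restricting to $\mA$ rather than all of $\ZZ_q^n$ costs nothing asymptotically: one must ensure the deleted tubes never exhaust $\mA$ before enough orbits are collected, which is precisely what the bound $|\mA|/(n\sum_{i}\binom{n}{i}(q-1)^i)$ guarantees, and that the $\sqrt n$-type losses of Lemma~\ref{lm:nonprime} wash out in the limit.
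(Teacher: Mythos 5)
Your proposal is correct and follows essentially the same route as the paper's proof: both confine the greedy Gilbert--Varshamov selection to the good set $\mA$ from Lemma~\ref{lm:nonprime}, add whole cyclic orbits at each step, and use shift-invariance of Hamming distance (your ``tube'' reduction is exactly the paper's observation that $d(\bu_i,\bv)<d$ would force $d(\bu,\bv_{n-i})<d$ for a shift $\bv_{n-i}$ already in the code) to reduce inter-orbit distance to a single-vector condition. The only difference is bookkeeping---you delete tubes of volume $n\sum_{i=0}^{d-1}\binom{n}{i}(q-1)^i$ per chosen orbit, while the paper bounds $|C|V$ over all codewords---and these give the same count.
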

\begin{proof}.
Our task is to pick as many vectors as possible from set $\mA$ such that the distance between each other
is at least $d$. For $\bu\in \ZZ_q^n$, define
$V(\bu,d)=\{\bv\in \ZZ_q^n: d(\bv,\bu)<d\}$. The size of this Hamming ball is $V:=|V(\bu,d)|=\sum_{i=1}^{d-1}\binom{n}{i}(q-1)^i$.
Let $C\subseteq \mA$ be the hopping cyclic code with minimum distance $d$. We claim that
if $|C|V<|\mA|$, there still exists $\bu\in \mA-C$ such that $C\cup\{\bu,\bu_1,\ldots,\bu_{n-1}\}$ is a hopping cyclic code with minimum distance $d$. Since $|C|V<|\mA|$, there exists $\bu\in \mA$ such that $\bu$ is at least $d$ far away from any codewords in $C$ as
$$
|\mA-\bigcup_{\bc\in C}V(\bc,d)|\geq |\mA|-|C|V>0.
$$
Moreover, $C$ is already a hopping cyclic code. If $d(\bu_i,\bv)<d$ for some $\bv\in C$, then $d(\bu,\bv_{n-i})<d$ with $\bv_{n-i}\in C$. The contradiction occurs. This implies that $\bu_i$ is also at least $d$ far away from any codewords in $C$ for $1\leq i\leq n-1$.
Therefore, we can add $\bu$ to $C$ until $|C|V\geq|\mA|$. The desired result follows as
$|\mA|\geq q^n(1-n^2e^{-\frac{\epsilon^2 (\sqrt{n}-2)}{2}})$.
\end{proof}

The following corollary is a direct consequence of Lemma \ref{lm:convert} and Theorem \ref{thm:GV}.
\begin{cor}[Gilbert-Varshamov Bound I on FH sequences]\label{GV}
Let $n$ and $q>1$ be an integer. Let $\Ge$ be any small constant.
There exists FH sequences set $\mF$ with parameters $(n,M,\lambda;q)$ for $\lambda\geq n-(n-2\sqrt{n})(1-1/q-\epsilon)$
provided that
$$
nM\leq \frac{q^n\big(1-n^2e^{-\frac{\epsilon^2 (\sqrt{n}-2)}{2}}\big)}{\sum_{i=1}^{n-\lambda-1}\binom{n}{i}(q-1)^i}.
$$
When when $n\rightarrow \infty$, this implies the existence of FH sequences set with
information rate $r$ and relative Hamming correlation $\delta_H>\frac{1}{q}+\Ge$ satisfying
\begin{eqnarray*}
r\geq 1-H_q(1-\delta_H).
\end{eqnarray*}
\end{cor}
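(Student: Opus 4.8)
The plan is to read off the finite-length claim by feeding $d=n-\lambda$ into Theorem~\ref{thm:GV} and translating the resulting hopping cyclic code into an FH sequence set via Lemma~\ref{lm:convert}; the asymptotic statement then follows from the standard entropy estimate for the volume of a Hamming ball.

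First I would set $d=n-\lambda$, so that the hypothesis $\lambda\ge n-(n-2\sqrt{n})(1-1/q-\epsilon)$ is exactly the requirement $d\le (n-2\sqrt{n})(1-1/q-\epsilon)$ needed in Theorem~\ref{thm:GV}. Applying that theorem produces a hopping cyclic code of length $n$, size $nM$, and minimum distance $d=n-\lambda$ whenever
$$nM\le \frac{q^n\big(1-n^2e^{-\frac{\epsilon^2(\sqrt{n}-2)}{2}}\big)}{\sum_{i=1}^{d-1}\binom{n}{i}(q-1)^i},$$
and since $d-1=n-\lambda-1$ the denominator is precisely the one appearing in the corollary. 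Invoking Lemma~\ref{lm:convert} in the direction ``code $\Rightarrow$ FH set'', a $q$-ary $(n,nM,n-\lambda)$ hopping cyclic code yields an $(n,M,\lambda;q)$ FH sequence set, which establishes the finite-length assertion verbatim.

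For the asymptotic part I would fix a target $\delta_H>\tfrac1q+\epsilon$, put $\lambda=\lceil \delta_H n\rceil$, and observe that for all sufficiently large $n$ one has $\delta_H> 1-(1-2/\sqrt{n})(1-1/q-\epsilon)$, so the hypothesis on $\lambda$ is satisfied. The numerator equals $q^n(1-o(1))$ because $n^2e^{-\epsilon^2(\sqrt{n}-2)/2}\to 0$. For the denominator I would apply the classical estimate $\sum_{i=0}^{\lfloor\rho n\rfloor}\binom{n}{i}(q-1)^i\le q^{nH_q(\rho)}$, valid for $0\le \rho\le 1-1/q$, with relative radius $\rho=(n-\lambda-1)/n\to 1-\delta_H$. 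Choosing $M$ to saturate the displayed inequality gives $\log_q M\ge n-nH_q(1-\delta_H)-\log_q n-o(n)$; dividing by $n$ and letting $n\to\infty$ yields $r\ge 1-H_q(1-\delta_H)$.

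I do not expect a genuine obstacle, since the corollary merely repackages Theorem~\ref{thm:GV} and Lemma~\ref{lm:convert}. The one point requiring care is that the entropy volume bound holds only when the relative radius satisfies $\rho\le 1-1/q$; this is exactly why the hypothesis forces $\delta_H>\tfrac1q+\epsilon$, equivalently $1-\delta_H<1-1/q$, placing $\rho$ in the admissible regime. One must also verify that the lower-order contributions---the factor $(\log_q n)/n$, the $o(1)$ from the numerator, and the $O(1/\sqrt{n})$ discrepancy between $(n-2\sqrt{n})$ and $n$ in the relative distance---all vanish in the limit, as already anticipated in the remark following Lemma~\ref{lm:nonprime}.
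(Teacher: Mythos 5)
Your proposal is correct and follows exactly the route the paper intends: the paper states the corollary as a direct consequence of Theorem~\ref{thm:GV} (applied with code size $nM$ and $d=n-\lambda$) and Lemma~\ref{lm:convert}, with the asymptotic claim following from the standard entropy bound on the Hamming ball volume. Your write-up simply makes explicit the substitution $d=n-\lambda$, the admissibility condition $\rho\le 1-1/q$, and the vanishing lower-order terms, all of which the paper leaves implicit.
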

%
%

\subsection{Elementary Method}

In this subsection, we present another proof of Gilbert-Varshamov bound which does not rely on any advanced tool. Although this proof only covers part of rate region in Gilbert-Varshamov bound, we believe that it helps to understand this bound from another angle.
Our first step is to bound the size of set $\{\bu\in F^n: d(\bu)<d\}$ for a given alphabet set $F$.

\begin{lemma}
Let $F$ be a finite alphabet of size $q>1$. For $d<(1-\frac{1}{\sqrt{q}+1})n$,
\begin{equation*}
|\{\bu\in F^n: d(\bu)<d\}|\leq  nd\binom{n}{d-1}q^{\frac{n+d-1}{2}}.
\end{equation*}
\end{lemma}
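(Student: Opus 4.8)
The plan is to reduce the count to a per-shift estimate and then extract a square-root saving from the cyclic structure. Since $d(\bu)<d$ means there exists some $0<i<n$ with $d(\bu,\bu_i)\le d-1$, a union bound gives
\[
|\{\bu\in F^n: d(\bu)<d\}| \le \sum_{i=1}^{n-1} N_i, \qquad N_i:=|\{\bu\in F^n: d(\bu,\bu_i)\le d-1\}|.
\]
It then suffices to bound each $N_i$ by $d\binom{n}{d-1}q^{(n+d-1)/2}$ and multiply by $n-1\le n$.

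First I would fix $i$ and set $g=\gcd(i,n)$, $\ell=n/g$. The shift by $i$ partitions $\{0,\dots,n-1\}$ into $g$ orbits, each a cyclic sequence of length $\ell$; crucially, since $0<i<n$ we have $g\le n/2$, hence $\ell\ge 2$. For a vector $\bu$ let $S=\{j: u_j\neq u_{j+i}\}$ be its set of \emph{break} positions, so $|S|=k=d(\bu,\bu_i)$. Inside each orbit the breaks cut the cyclic sequence into maximal constant runs (\emph{arcs}): an orbit carrying $b\ge 1$ breaks splits into exactly $b$ arcs, while a break-free orbit is a single arc. If $A(S)$ is the total number of arcs, then $\bu$ is constant on each arc, so the number of vectors with a prescribed break set $S$ is at most $q^{A(S)}$ (dropping the across-break inequality constraints). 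Hence
\[
N_i\le \sum_{k=0}^{d-1}\sum_{|S|=k} q^{A(S)} \le \sum_{k=0}^{d-1}\binom{n}{k}\,\max_{|S|=k}q^{A(S)}.
\]

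The heart of the argument, and the step I expect to be the main obstacle, is the bound $A(S)\le (n+k)/2$. Writing $m$ for the number of break-free orbits, we have $A(S)=k+m$. Each broken orbit absorbs at most $\ell$ breaks, so there are at least $k/\ell$ broken orbits, giving $m\le g-k/\ell=(n-k)/\ell$. Since $\ell\ge 2$ and $k<n$, this yields $A(S)=k+m\le k+(n-k)/\ell\le k+(n-k)/2=(n+k)/2$. This is exactly where the factor-of-two (square-root) saving originates: a large $\gcd$ forces short orbits, and short orbits can harbor few breaks, so one cannot simultaneously have many free arcs and few breaks. Making this trade-off precise — getting the per-orbit arc count right and controlling the extremal concentration of breaks — is the delicate part.

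Finally I would sum. Using $A(S)\le (n+k)/2$,
\[
N_i\le \sum_{k=0}^{d-1}\binom{n}{k}q^{(n+k)/2}=q^{n/2}\sum_{k=0}^{d-1}\binom{n}{k}q^{k/2}.
\]
The ratio of consecutive summands, $\frac{\binom{n}{k+1}q^{(k+1)/2}}{\binom{n}{k}q^{k/2}}=\frac{n-k}{k+1}\sqrt q$, exceeds $1$ precisely while $k<\frac{n\sqrt q-1}{\sqrt q+1}$, and the hypothesis $d<\big(1-\tfrac1{\sqrt q+1}\big)n=\tfrac{\sqrt q}{\sqrt q+1}\,n$ guarantees this for all $k\le d-1$. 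So the summand is increasing in $k$, whence the sum is at most $d$ times its last term $\binom{n}{d-1}q^{(d-1)/2}$, giving $N_i\le d\binom{n}{d-1}q^{(n+d-1)/2}$. Multiplying by the $n-1$ shifts completes the estimate; note that the assumption on $d$ enters only to make the weighted binomial sum monotone, while the true combinatorial content lives in $A(S)\le (n+k)/2$.
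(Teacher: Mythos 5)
Your proposal is correct and follows essentially the same route as the paper's proof: a union bound over the $n-1$ shifts, enumeration of the agreement/disagreement pattern via $\binom{n}{k}$, the identical key estimate that a prescribed pattern with $k$ breaks (equivalently $n-k$ agreements) admits at most $q^{(n+k)/2}$ completions, and the same monotone-ratio argument bounding the weighted binomial sum by $d$ times its extreme term. The only difference is bookkeeping in the key step --- the paper extracts disjoint maximal chains of size $\ge 2$ inside the agreement set, while you cut the $\gcd(i,n)$ shift-orbits into arcs and bound the number of break-free orbits --- but these are dual formulations of the very same count.
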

\begin{proof}
We fix $0<i<n$ and bound the size of set $\mA_i:=\{\bu\in F^n: d(\bu,\bu_i)<d\}$.
Let $\bu=(u_0,\ldots,u_{n-1})\in \mA_i$.
Since $d(\bu,\bu_i)<d$, there exists an index subset of size $k\geq n-d+1$ such that $\bu$ and $\bu_i$ agree on these $k$ indices.
There are $\binom{n}{k}$ distinct subsets of size $k$. We fix $T=\{j_1,\ldots,j_{k}\}$ to be one of them and count the number of $\bu$ that satisfies this requirement. Note that $\bu$ is subject to $u_{j_\ell}=u_{j_\ell+i}$ for $1\leq \ell\leq k$.
For each $\ell$, let $r$ be the largest number such that $\{j_\ell, j_\ell+i,\ldots,j_\ell+i(r-1)\}\subseteq T$.
This implies that
$$u_{j_\ell}=u_{j_\ell+i}=u_{j_\ell+2i}=\cdots=u_{j_\ell+ri}.$$
Define $[j_\ell]:=\{j_\ell, j_\ell+i,\ldots,j_\ell+i(r-1),j_\ell+ir\}$. It is clear that if $j_r\in [j_\ell]$, then $[j_r]\subseteq [j_\ell]$.
Let $[\ell_1],\ldots, [\ell_a]$ be the maximal subsets from $[j_1],\ldots,[j_k]$. Observe that they are disjoint and each set has size at least $2$.
Define $S=\bigcup_{i=1}^a[\ell_i]$. It is clear that $T\subseteq S$ and $a\leq \frac{|S|}{2}$. Observe that if $j\in S$, then
$u_j=u_{\ell_t}$ for some $1\leq t\leq a$.
This implies if we fix $u_{\ell_1},\ldots,u_{\ell_a}$ and $u_i$ for $i\in [n]\backslash S$, $\bu$ is completely determined.
Thus, there are $a+n-|S|\leq n-\frac{|S|}{2}\leq n-\frac{k}{2}$ indices to be fixed which yields at most $q^{n-k/2}$ distinct $\bu$.
Let $k$ run from $n-d+1$ to $n$, the number of $\bu$ is upper bounded by
$
\sum_{k=n-d+1}^n \binom{n}{k}q^{n-\frac{k}{2}}.
$

We proceed to bound this value.
Observe that
$$
\frac{\binom{n}{k}q^{n-k/2}}{\binom{n}{k-1}q^{n-(k-1)/2}}=(\frac{n-k+1}{k})q^{-\frac{1}{2}}.
$$
This means if $k\geq \frac{n+1}{\sqrt{q}+1}$, function $\binom{n}{k}q^{n-\frac{k}{2}}$ is monotone decreasing with respect to $k$.
Since $k\geq n-d+1> \frac{n+1}{\sqrt{q}+1} $,
this function reaches its maximum when $k=n-d+1$.
It follows that
$$
\sum_{k=n-d+1}^n \binom{n}{k}q^{n-\frac{k}{2}}\leq d\binom{n}{d-1}q^{\frac{n+d-1}{2}}.
$$
This implies $|\mA_i|\leq d\binom{n}{d-1}q^{\frac{n+d-1}{2}}$.
Observe that
\begin{equation*}
\{\bu\in F^n: d(\bu)<d\}=\bigcup_{0<i<n}\mA_i.
\end{equation*}
The desired result follows.
\end{proof}

To apply the argument of Gilbert-Varshamov bound in Theorem \ref{thm:GV}, we need to show that
$nd\binom{n}{d-1}q^{\frac{n+d-1}{2}}$ is at most $o(q^n)$.
\begin{lemma}
Assume that $d<(1-\frac{e}{\sqrt{q}})n$, then $nd\binom{n}{d-1}q^{\frac{n+d-1}{2}}=o(q^n)$.
\end{lemma}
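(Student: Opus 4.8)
The plan is to divide the target quantity by $q^n$ and show the ratio tends to $0$. Writing the claim as $nd\binom{n}{d-1}q^{(n+d-1)/2}=o(q^n)$, it is equivalent to showing
\[
nd\binom{n}{d-1}q^{-\frac{n-d+1}{2}}\longrightarrow 0,
\]
since $q^{(n+d-1)/2}/q^n=q^{-(n-d+1)/2}$. The polynomial prefactor $nd\le n^2$ is harmless, so everything hinges on the exponentially-behaved part $\binom{n}{d-1}q^{-(n-d+1)/2}$.

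First I would pass to the complementary index. Using $\binom{n}{d-1}=\binom{n}{n-d+1}$ and setting $j:=n-d+1$, the quantity becomes $\binom{n}{j}q^{-j/2}$. Applying the standard estimate $\binom{n}{j}\le(en/j)^j$ gives
\[
\binom{n}{j}q^{-j/2}\le\left(\frac{en}{j}\right)^{j}q^{-j/2}=\left(\frac{en}{j\sqrt q}\right)^{j}.
\]
This is the step where the hypothesis enters: the assumption $d<(1-e/\sqrt q)n$ is exactly the statement $j=n-d+1>\frac{e}{\sqrt q}n$, which forces $j\sqrt q>en$ and hence the base $\frac{en}{j\sqrt q}<1$.

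To finish, I would argue that the base is bounded away from $1$ by a constant. Treating $d$ as a fixed fraction $d=\delta n$ with $\delta<1-e/\sqrt q$ (the regime relevant to the Gilbert--Varshamov application in Theorem~\ref{thm:GV}), we have $j\ge(1-\delta)n$ with $1-\delta>e/\sqrt q$ fixed, so $\frac{en}{j\sqrt q}\le\rho$ for some constant $\rho<1$ independent of $n$. Hence $\binom{n}{j}q^{-j/2}\le\rho^{j}\le\rho^{(e/\sqrt q)n}$ decays exponentially in $n$, and multiplying by $nd\le n^2$ still yields $o(1)$.

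The main obstacle I anticipate is precisely this last uniformity point. The bound $(en/(j\sqrt q))^j$ only tends to $0$ if the base stays a fixed distance below $1$: if one allowed $d$ to approach the threshold like $(1-e/\sqrt q)n-O(1)$, then $j\sim(e/\sqrt q)n$ makes the base $1-\Theta(1/n)$ and $(\text{base})^j$ tends to a positive constant, which the polynomial prefactor $nd$ would then blow up. So the care in the argument is to use the strict separation $\limsup d/n<1-e/\sqrt q$ (equivalently, a fixed constant gap) rather than the bare inequality, and to confirm that the resulting exponential decay dominates the $nd$ factor. All the remaining steps are routine estimates.
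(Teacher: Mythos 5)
Your individual estimates are all correct, and your argument does prove the lemma in the regime where $d/n$ is bounded by a \emph{fixed} constant below $1-e/\sqrt q$. But the issue you flag at the end is a genuine gap, not merely a point of care: the lemma as stated allows $d$ to track the threshold, e.g.\ $d=\lceil(1-e/\sqrt q)n\rceil-1$, and for such sequences your bound $\bigl(\frac{en}{j\sqrt q}\bigr)^{j}$ with $j=n-d+1$ has base $1-\Theta(1/n)$ and exponent $\Theta(n)$, hence stays $\Theta(1)$; after multiplying by $nd$ you get nothing. Your proposed repair --- reading the hypothesis as $\limsup d/n<1-e/\sqrt q$ --- proves a strictly weaker statement than the one claimed, and the weaker statement is not what the paper needs: the corollary this lemma feeds (Gilbert--Varshamov Bound II on FH sequences) asserts the rate bound for every relative Hamming correlation $\delta_H\ge e/\sqrt q$, so one needs decay that is uniform over all $d<(1-e/\sqrt q)n$, including sequences with $d/n\to 1-e/\sqrt q$.

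The missing idea is that the lemma is true with the bare inequality; what fails near the threshold is only your choice of binomial estimate $\binom nj\le(en/j)^j$, which is too lossy there. The paper instead uses the entropy bound $\binom{n}{d-1}\le 2^{nH_2((d-1)/n)}$ and reduces the claim to $H_2(\delta)<\frac{1-\delta}{2}\log_2q$ for $\delta=d/n$. Since $1-\delta\ge e/\sqrt q$, one has $-(1-\delta)\log_2(1-\delta)\le-(1-\delta)\log_2(e/\sqrt q)$, whence
\[
H_2(\delta)-\tfrac{1-\delta}{2}\log_2 q\;\le\; f(\delta):=-\delta\log_2\delta-(1-\delta)\log_2 e .
\]
The function $f$ is strictly increasing on $(0,1)$ with $f(1)=0$, so $f(\delta)\le f\bigl(1-e/\sqrt q\bigr)<0$: a negative constant depending only on $q$, valid uniformly over the whole admissible range of $d$. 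This gives exponential decay at a uniform rate, which is exactly the uniformity your estimate cannot deliver. Indeed, at $j/n=e/\sqrt q$ your bound $(en/j)^jq^{-j/2}$ equals $1$ exactly, whereas with $\beta=j/n$ the entropy bound is smaller than $(en/j)^j$ by the factor $2^{-n[\beta\log_2 e-(1-\beta)\log_2\frac{1}{1-\beta}]}=2^{nf(1-\beta)}$, and this surviving slack $nf(1-e/\sqrt q)$ is precisely the paper's decay rate at the threshold. So the fix is not to strengthen the hypothesis but to replace $(en/j)^j$ by $2^{nH_2(j/n)}$ and run the monotonicity argument above.
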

\begin{proof}
Observe that
$$
\frac{nd\binom{n}{d-1}q^{\frac{n+d-1}{2}}}{q^n}\leq nd(2^{H_2(\frac{d-1}{n})}q^{\frac{d-n-1}{2n}})^n=2^{n\big(H_2(\frac{d}{n})-\frac{n-d}{2n}\log_2 q+o(1)\big)}.
$$
Let $\delta=\frac{d}{n}$ and then $\delta\leq 1-\frac{e}{\sqrt{q}}$.
It suffices to show that $H_2(\delta)<\frac{1-\delta}{2}\log_2 q$.
Note that
\begin{eqnarray*}
H_2(\delta)&=&-\delta\log_2 \delta-(1-\delta)\log_2 (1-\delta)\\
&\leq&-\delta\log_2 \delta-(1-\delta)\log_2 (\frac{e}{\sqrt{q}})=-\delta\log_2 \delta-(1-\delta)\log_2 e+\frac{1-\delta}{2}\log_2 q.
\end{eqnarray*}
The first inequality is due to $\delta\leq 1-\frac{e}{\sqrt{q}}$.
Let $f(\delta)=-\delta\log_2 \delta-(1-\delta)\log_2 e$. The derivative of $f(\delta)$ is
$-\log_2 \delta-\delta\log_2 e+\log_2 e$ which is always positive when $\delta\in (0,1)$.
That means $f(\delta)<f(1)=0$ for $\delta\leq 1-\frac{e}{\sqrt{q}}$. The desired result follows.
\end{proof}

The same argument in Theorem \ref{thm:GV} gives the Gilbert-Varshamov bound on FH sequences set. Although it will not affect the asymptotic behaviour of Gilbert-Varshamov, the size of FH sequences set is bigger than the size in Corollary \ref{GV}.

\begin{cor}[Gilbert-Varshamov Bound II on FH sequences]
Let $n$ and $q>1$ be an integer.
There exists FH sequences set $\mF$ with parameters $(n,M,\lambda\geq \frac{en}{\sqrt{q}};q)$
provided that
$$
nM\leq \frac{q^n-(n-\lambda)\binom{n}{n-\lambda-1}q^{\frac{2n-\lambda-1}{2}}}{\sum_{i=1}^{n-\lambda-1}\binom{n}{i}(q-1)^i}.
$$
When $n\rightarrow \infty$, this implies the existence of FH sequences set with
information rate $r$ and relative Hamming correlation $\delta_H\geq \frac{e}{\sqrt{q}}$ satisfying
\begin{eqnarray*}
r\geq 1-H_q(1-\delta_H).
\end{eqnarray*}
\end{cor}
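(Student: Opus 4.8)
The plan is to run the Gilbert--Varshamov greedy construction of Theorem~\ref{thm:GV} verbatim, but feeding it the sharper count of ``bad'' vectors supplied by the two elementary lemmas of this subsection rather than the martingale bound of Lemma~\ref{lm:nonprime}. Set $d=n-\lambda$, so that by Lemma~\ref{lm:convert} it suffices to produce a $q$-ary hopping cyclic code of length $n$, size $nM$, and minimum distance $d$. Define the good set $\mA=\{\bu\in\ZZ_q^n:\ d(\bu)\ge d\}$. The counting lemma above bounds its complement, giving $|\mA|\ge q^n-nd\binom{n}{d-1}q^{(n+d-1)/2}$, which upon substituting $d=n-\lambda$ is, up to a harmless polynomial factor, the numerator appearing in the statement. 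The hypothesis $\lambda\ge en/\sqrt q$ is precisely the condition $d\le(1-e/\sqrt q)n$, i.e.\ the regime of the subsequent $o(q^n)$ estimate; hence the subtracted term is $o(q^n)$ and $|\mA|=q^n(1-o(1))$.

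First I would record, exactly as in Theorem~\ref{thm:GV}, the shift-invariance that legitimizes the greedy step: if $\bu\in\mA$ lies at Hamming distance at least $d$ from every current codeword, then so does each cyclic shift $\bu_i$, since $d(\bu_i,\bv)<d$ would force $d(\bu,\bv_{n-i})<d$ with $\bv_{n-i}$ still in the code. Writing $V=\sum_{i=1}^{d-1}\binom{n}{i}(q-1)^i$ for the volume of a Hamming ball of radius $d-1$, the balls around the current codewords cover at most $|C|\,V$ points of $\mA$, so while $|C|\,V<|\mA|$ a fresh vector can be found, appended together with all $n$ of its shifts, and the code stays a hopping cyclic code of minimum distance $d$. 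The process halts only when $|C|\,V\ge|\mA|$, yielding $|C|\ge|\mA|/V$; since this size is $nM$, the finite bound in the statement follows.

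Then I would pass to the limit $n\to\infty$ with $\delta_H=\lambda/n\ge e/\sqrt q$, so $d/n=1-\delta_H$. Because $\delta_H\ge e/\sqrt q>1/q$, the standard entropy estimate $V\le q^{nH_q(1-\delta_H)}$ is valid in this range, and combining it with $|\mA|=q^n(1-o(1))$ gives $nM\ge q^{n(1-H_q(1-\delta_H)-o(1))}$. Taking $\log_q$, dividing by $n$, and absorbing the vanishing term $\tfrac{1}{n}\log_q n$ yields $r=\tfrac{1}{n}\log_q M\ge 1-H_q(1-\delta_H)$ in the limit, which is the claimed Gilbert--Varshamov bound.

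The ball-volume counting and the entropy estimate are routine. The genuine content is the shift-invariance observation that preserves the hopping cyclic structure when an entire orbit of $n$ shifts is appended at once, but that is already supplied by Theorem~\ref{thm:GV}. The main obstacle, and the reason this elementary route is weaker than the martingale one, is ensuring that $\mA$ occupies a $(1-o(1))$ fraction of $\ZZ_q^n$: this is exactly what forces the restriction $\delta_H\ge e/\sqrt q$. Outside this range the $o(q^n)$ estimate fails, the subtracted term in the numerator can overwhelm $q^n$, and the construction produces no positive rate—precisely the ``only part of the rate region'' limitation advertised for the elementary proof.
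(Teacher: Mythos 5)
Your proposal is correct and follows essentially the same route as the paper, whose own ``proof'' of this corollary is just the one-line remark that the greedy argument of Theorem~\ref{thm:GV} goes through verbatim once the martingale bound is replaced by the two elementary counting lemmas of this subsection; you have simply spelled out that argument, including the shift-invariance step and the entropy estimate in the regime $\delta_H\ge e/\sqrt q$. The only discrepancy you flag---your good-set bound carries an extra polynomial factor $n$ relative to the numerator in the statement---is an imprecision in the paper's own statement and, as you note, is asymptotically harmless.
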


\section{Conclusion}
In this paper, we present the asymptotic upper bounds and lower bound on the FH sequences. Our upper bounds are derived from \cite{Ding09} by letting sequence length tend to infinity. Our lower bound comes from the classic Gilbert-Varshamov bound. However, the original Gilbert-Varshamov bound does not hold for cyclic codes. The main contribution of this paper is dedicated to showing the existence of hopping cyclic codes attaining this bound. Since a hopping cyclic code can be converted into a set of FH sequence, this result also implies the existence of FH sequences attaining this bound. As we know, there exists a class of algebraic geometry codes that can beat Gilbert-Varshamov bound over a field of size $q\geq 49$ \cite{AG}. Unfortunately, this class of algebraic geometry codes is not cyclic. It seems less likely to convert it into FH sequences.
There then arises an open problem whether there exists a class of FH sequences beating Gilbert-Varshamov bound.




\end{document}